\newcommand{\DontPrintSemicolon}{\dontprintsemicolon}\fi
\newcommand{\plll}{PotLLL\xspace}
\newcommand{\dlll}{DeepLLL\xspace}
\newcommand{\ntl}{NTL\xspace}
\newcommand{\tL}{\mathcal{L}}
\newcommand{\pot}{\mathrm{Pot}}
\newcommand{\argmin}{\mathrm{argmin}}
\newcommand{\R}{\mathbb{R}}
\newcommand{\Q}{\mathbb{Q}}
\newcommand{\Z}{\mathbb{Z}}
\newcommand{\spa}{\mathrm{span}\,}
\newcommand{\vol}{\mathrm{vol}\,}
\newcommand{\norm}[1]{\lVert #1 \rVert}
\newtheorem{definition}{Definition}[section]
\newtheorem{proposition}[definition]{Proposition}
\newtheorem{lemma}[definition]{Lemma}
\newtheorem{corollary}[definition]{Corollary}
\newcommand{\grossO}[1]{\ensuremath{\mathcal{O}(#1)}}
\begin{document}

\title{A Polynomial Time Version of LLL With Deep Insertions}

\date{\today}

\author{Felix Fontein\footnote{Universit\"at Z\"urich, \href{mailto:felix.fontein@math.uzh.ch}{felix.fontein@math.uzh.ch}} 
   \and Michael Schneider\footnote{Technische Universit\"at Darmstadt, \href{mailto:mischnei@cdc.informatik.tu-darmstadt.de}{mischnei@cdc.informatik.tu-darmstadt.de}} 
   \and Urs Wagner\footnote{Universit\"at Z\"urich, \href{mailto:urs.wagner@math.uzh.ch}{urs.wagner@math.uzh.ch}}}

\maketitle

\begin{abstract}

Lattice reduction algorithms have numerous applications in number theory, algebra, as well as in cryptanalysis.
The most famous algorithm for lattice reduction is the LLL algorithm. In polynomial time it computes a reduced basis with provable output quality.
One early improvement of the LLL algorithm was LLL with deep insertions (\dlll).
The output of this version of LLL has higher quality in practice but the running
time seems to explode. Weaker variants of \dlll, where the insertions are
restricted to blocks, behave nicely in practice concerning the running time.
However no proof of polynomial running time is known.
In this paper a new variant of \dlll with provably polynomial running time is presented. We compare the practical behavior of the new algorithm to classical LLL, BKZ as well as blockwise variants of \dlll regarding both the output quality and running time.

\vspace{3mm}
Keywords: Lattice Reduction, LLL Algorithm, Deep Insertion

Mathematics Subject Classification (2000): 68R05 and 94A60
\end{abstract}

\section{Introduction}\label{sec:introduction}

The well-known LLL lattice reduction algorithm was presented in 1982 by Lenstra, Lenstra, Lov\'asz \cite{LLL}. 
Apart from various other applications (e.g.~\cite[Chapter 9,10]{ng10}) it has already at an early stage been used to attack various public key cryptosystems.
Nevertheless lattice problems remain popular when it comes to the construction of provably secure cryptosystems (e.g.~\cite[Chapter~13]{ng10}). 
Consequently  improvements in lattice reduction still have a direct impact on the security of many cryptosystems and rise high interest in the crypto-community.

Many lattice reduction algorithms used in practice are generalizations of the LLL algorithm. 
The Block-Korkine-Zolotarev (BKZ) reduction algorithm by Schnorr and Euchner~\cite{BKZ} is probably the most used algorithm when stronger reduction than the one achieved by LLL is required. It can be seen as a generalization of LLL to higher blocksizes, and while the running time seems to behave well for small blocksizes~\cite{gam08}, no useful upper bound has been proven so far.
Another improvement of the LLL algorithm has also been suggested in~\cite{BKZ}. 
While in LLL adjacent basis vectors are swapped if certain conditions are satisfied, in the so called  LLL with deep insertions (\dlll in the sequel), basis vectors can be swapped even when not adjacent. The practical behavior of~\dlll when it comes to the reducedness of the output basis is superior the one of LLL. Unfortunately also the running time explodes and does not seem to be polynomial in the dimension of the lattice. One attempt get across this problem, is to restrict the insertions to certain blocks of basis vectors. While the authors in \cite{BKZ} claim that these blockwise restriction variants of~\dlll run in polynomial time, we are not aware of any proof thereof. 
For an overview on the practical behavior of the different variants and improvements on LLL, we refer to~\cite{ng06,gam08}.
 There the practical behavior of the reduction algorithms is investigated using
the widely used~\ntl library.

In this paper we present a new version of~\dlll, called~\plll. To our knowledge
it is the first improvement of LLL with regard to deep insertions which provably runs in polynomial time. The
practical behavior of~\plll regarding both the output quality and running time
is empirically tested and compared to BKZ and~\dlll with different blocksizes. 
The tests are performed with a completely new implementation of the different
reduction algorithms. This additionally allows an independent review of the
results in~\cite{ng06,gam08}. The tests indicate that our algorithm can serve as a
serious alternative to BKZ with low blocksizes.

The paper is organized as follows. In Section~\ref{sec:prelim} all
necessary notations and definitions are given.
In Section~\ref{sec:algorithm} the reduction notion and the new algorithm is
presented and a theoretical analysis is provided. 
Section~\ref{sec:experiments} contains the empirical results and conclusions are drawn in Section~\ref{sec:conclusion}.

\section{Preliminaries}
\label{sec:prelim}

 A lattice $\tL \subset \R^m$ of rank $n$ and dimension $m$ is a discrete subgroup of $\R^m$
generated by integer
    linear combinations of $n$ linearly independent vectors $b_1,\dots,b_n$ in
$\R^m$:
    \[
    \tL= \tL(b_1,\dots,b_n) := \biggl\{\sum_{i=1}^n x_ib_i \biggm| \forall i :
x_i \in \Z \biggr\}\,.
    \]
We will often write the basis $b_1,\dots,b_n$ as rows of a matrix $B$ in the following way $B= [b_1,\dots,b_n]$.
In order to have exact representations in computers, only lattices in $\Q^n$ are
considered. Simple scaling by the least common multiple of the denominators
allows us to restrict ourselves to integer lattices $\tL \subseteq \Z^m$.
The volume of a lattice $\tL(B)$ equals the volume of its fundamental
parallelepiped $\vol(\tL)=\sqrt{\det(BB^t)}$.
For $n\geq 2$, a lattice has infinitely many bases as $\tL(B)=\tL(B')$ if and only if $\exists U \in
GL_n(\Z): B=UB'$. Therefore, the volume of a lattice is well defined.
By $\pi_k: \R^m \rightarrow \spa\{b_1, \dots, b_{k-1} \}^\bot$ we denote the
orthogonal projection from $\R^m$
  onto the orthogonal complement of $\spa\{b_1, \dots, b_{k-1} \}$.
 In
particular, $\pi_1 =
  \mathrm{id}_{\R^m}$ and $b^*_i:=\pi_i(b_i)$ equals the $i$-th basis vector of the Gram-Schmidt orthogonalization 
$B^*=[b^*_1,\dots, b^*_n]$ of $B$.
By $\mu_{i,j}:= \langle b_i,b^*_j \rangle / \langle b^*_j,b^*_j \rangle$,
$j<i$, we denote the Gram-Schmidt coefficients. The Gram-Schmidt vectors can
iteratively be computed by
 $\pi_i(b_i)=b_i^* = b_i - \sum_{j=1}^{i-1} \mu_{i,j} b_j^*$.

Throughout this paper, by $\norm{\cdot}$ we denote the Euclidean norm and by
$\lambda_1(\tL)$ we denote the length of a shortest non-zero vector in $\tL$
with respect to the Euclidean norm: $\lambda_1(\tL):=\min_{v\in \tL} \norm{v}$.
Determining $\lambda_1(\tL)$ is commonly known as the shortest vector problem
(SVP) and is proven to be NP-hard (under randomized reductions) (see e.g.
\cite{mi02}). 
Upper bounds with respect to the determinant exist, for all rank $n$ lattices $\tL$ we have  \cite{ng10}
\[
\frac{\lambda_1(\tL)^2}{\vol(\tL)^{2/n}}\leq \gamma_n \leq 1+\frac{n}{4}\,,
\]
where $\gamma_n$ is the \emph{Hermite constant} in dimension $n$.
Given a relatively short vector $v \in \tL$, one measures its quality by the
\emph{Hermite factor} $\norm{v}/\vol(\tL)^{1/n}$ it achieves.
Modern lattice reduction algorithms achieve a Hermite factor which is
exponential in $n$ and no polynomial time algorithm is known to achieve linear
or polynomial Hermite factors.

 Let $S_n$ denote the group of permutations of $n$ elements. By applying $\sigma \in S_n$ to a basis
 $B=[b_1,\dots,b_n]$, the basis vectors are reordered $\sigma
 B=[b_{\sigma(1)},\dots,b_{\sigma(n)}]$. For $1\leq k \leq \ell \leq n$ we define a class of
 elements $\sigma_{k,\ell} \in S_n$ as follows:
\begin{equation}
\sigma_{k,\ell}(i)=\left\{
\begin{array}{lll}
i & \mbox{for} &  i<k \mbox{ or } i>\ell\,, \\
\ell & \mbox{for} & i=k\,, \\
i-1 & \mbox{for} & k<i\leq \ell\,.
\end{array}
\right.
\end{equation}
Note that $\sigma_{k,\ell} = \sigma_{k,k+1} \sigma_{k+1,k+2} \cdots \sigma_{\ell-1,\ell}$ and that
$\sigma_{k,k+1}$ is swapping the two elements~$k$ and $k + 1$.

  \begin{definition}
    Let $\delta \in \left(1/4, 1\right]$.  A basis $B=[b_1,\dots,b_n]$
of a lattice
    $\tL(b_1,\dots,b_n)$ is called \emph{$\delta$-LLL reduced} if and only if it
satisfies the
    following two conditions:
    \begin{enumerate}
      \item $\forall 1\leq j<i\leq n: |\mu_{i,j}| \leq
\frac{1}{2}$ (size-reduced).
      \item  $1\leq k <n : \delta \cdot \|\pi_k(b_k)\|^2 \leq
\|\pi_k(b_{k+1})\|^2$ (Lov{\'a}sz-condition).
    \end{enumerate}
  \end{definition}
  A $\delta$-LLL reduced basis $B=[b_1,\dots,b_n]$ can be computed in polynomial time \cite{LLL} and  provably satisfies the following
bounds:
\begin{equation}
\|b_1\| \leq \bigl(\delta-1/4\bigr)^{-(n-1)/2} \cdot \lambda_1(\tL(B)) \quad \text{and} \quad
\|b_1\| \leq \bigl(\delta-1/4\bigr)^{-(n-1)/4} \cdot \vol(\tL(B))^{1/n} \label{equ:hermiteLLL}.
\end{equation}
  While these bounds can be reached, they are worst case bounds. In practice,
 LLL reduction algorithms behave much better \cite{ng06}.
One early attempt to improve the LLL reduction algorithm is due to Schnorr and
Euchner \cite{BKZ} who came up with the notion of a \dlll reduced basis:
\begin{definition}
    Let $\delta \in \left(1/4, 1\right]$.  A basis $B=[b_1,\dots,b_n]$
of a lattice
    $\tL(b_1,\dots,b_n)$ is called \emph{$\delta$-\dlll reduced with blocksize $\beta$} if and only if
it satisfies the
    following two conditions:
    \begin{enumerate}
      \item $\forall 1\leq j<i\leq n: |\mu_{i,j}| \leq \frac{1}{2}$ 
(size-reduced).
      \item  $\forall 1\leq k < \ell \leq n \mbox{ with } k\leq \beta \vee k-\ell \leq \beta: \delta \cdot \|\pi_k(b_k)\|^2 \leq\|\pi_k(b_{\ell})\|^2$.
    \end{enumerate}
\end{definition}
If $\beta=n$ we simply call this a \dlll reduced basis.
While the first basis vector of \dlll reduced bases in the worst case does not
achieve a better Hermite factor than classical LLL (see Section
\ref{sec:critical}), the according reduction algorithms usually return much
shorter vectors than pure LLL.
Unfortunately no polynomial time algorithm to compute \dlll reduced bases is known. 

The following definition is used in the proof (see e.g. \cite{mi02}) of the polynomial running time of the LLL reduction algorithm and will play a main role in our improved variant of LLL.
\begin{definition}
The \emph{potential} $\pot(B)$ of a lattice basis $B=[b_1,\dots,b_n]$ is defined as
\[
\pot(B):=\prod^n_{i=1} \vol(\tL(b_1,\dots,b_i))^2=\prod^n_{i=1}
\|b^*_i\|^{2(n-i+1)}\,.
\]
\end{definition}
Here it is used that $\vol(\tL) = \prod_{i=1}^{n} \norm{b_i^*}$.
Note that, unlike the volume of the lattice, the potential of a basis is variant
under basis permutations. The following lemma describes how the potential
changes if $\sigma_{k,\ell}$ is applied to the basis.

\begin{lemma}
\label{lem:pot}
Let $B=[b_1,\dots,b_n]$ be a lattice basis. Then for $1\leq k \leq \ell \leq n$
\[
\pot(\sigma_{k,\ell}B)=\pot(B) \cdot \prod^{\ell}_{i=k} \frac{\|\pi_i(b_{\ell})\|^2}{\|\pi_i(b_i)\|^2}.
\]
\end{lemma}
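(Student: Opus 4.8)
The plan is to track how the potential changes when we apply the elementary permutation $\sigma_{k,\ell}$, which cyclically moves $b_\ell$ into position $k$ and shifts $b_k,\dots,b_{\ell-1}$ each one slot to the right. Since $\pot(B)=\prod_{i=1}^n\vol(\tL(b_1,\dots,b_i))^2$, and the first $k-1$ and the last $n-\ell$ of these partial volumes involve only basis vectors whose set is unchanged by $\sigma_{k,\ell}$, those factors are untouched; only the partial volumes for indices $i=k,\dots,\ell$ can change. Moreover $\vol(\tL(b_1,\dots,b_\ell))$ is also unchanged, because $\{b_1,\dots,b_\ell\}$ as a set (hence the sublattice it generates) is fixed by $\sigma_{k,\ell}$. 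So the only genuine contributions come from the partial volumes at indices $i=k,\dots,\ell-1$.

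Next I would compute those partial volumes in terms of Gram--Schmidt data. For $k\le i\le \ell-1$, after applying $\sigma_{k,\ell}$ the first $i$ basis vectors become $b_1,\dots,b_{k-1},b_\ell,b_k,\dots,b_{i-1}$, so the new partial volume equals $\vol(\tL(b_1,\dots,b_{k-1},b_\ell,b_k,\dots,b_{i-1}))$. The key observation is that $\spa\{b_1,\dots,b_{k-1},b_\ell,b_k,\dots,b_{i-1}\}$ and the old span $\spa\{b_1,\dots,b_i\}$ need not coincide, but the ratio of consecutive partial volumes telescopes nicely. Concretely, I would use that for any basis the partial volume ratio $\vol(\tL(c_1,\dots,c_i))/\vol(\tL(c_1,\dots,c_{i-1}))$ equals the norm of the component of $c_i$ orthogonal to $\spa\{c_1,\dots,c_{i-1}\}$, i.e.\ the projection of $c_i$ orthogonal to the preceding vectors. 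Applying this to the new ordering at step $i$ (for $k<i\le\ell$): the $i$-th new vector is $b_{i-1}$ and the preceding new vectors span $\spa\{b_1,\dots,b_{k-1},b_\ell,b_k,\dots,b_{i-2}\}\supseteq\spa\{b_1,\dots,b_{k-1}\}$. The cleanest route is to form the telescoping product of the new partial volumes from index $k-1$ up to $\ell$ and compare it to the analogous product for $B$; since the endpoints ($i=k-1$ and $i=\ell$) agree, the two products of ratios are equal.

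To finish, I would express each ratio. For the \emph{new} ordering, the ratio at index $k$ is $\|\pi_k(b_\ell)\|$ (the component of $b_\ell$ orthogonal to $\spa\{b_1,\dots,b_{k-1}\}$, which is exactly $\pi_k$); and for $k<i\le\ell$ the ratio at index $i$ is the norm of the component of $b_{i-1}$ orthogonal to $\spa\{b_1,\dots,b_{k-1},b_\ell,b_k,\dots,b_{i-2}\}$. For the \emph{old} ordering the ratios are simply $\|\pi_i(b_i)\|=\|b_i^*\|$ for $i=k,\dots,\ell$. Taking the quotient of the (equal) telescoping products and squaring gives
\[
\frac{\pot(\sigma_{k,\ell}B)}{\pot(B)}=\prod_{i=k}^{\ell}\frac{(\text{new ratio at }i)^2}{\|\pi_i(b_i)\|^2},
\]
so it remains only to show $\prod_{i=k}^{\ell}(\text{new ratio at }i)^2=\prod_{i=k}^{\ell}\|\pi_i(b_\ell)\|^2$ as the claimed numerator. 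The main obstacle is this last identification: the intermediate "new ratios" are projections onto awkward mixed spans, not the clean $\pi_i$'s. I expect to resolve it by noting that the product of the new ratios over $i=k,\dots,\ell$ is itself a telescoping partial-volume ratio, namely $\vol(\tL(b_1,\dots,b_{k-1},b_\ell,b_k,\dots,b_{\ell-1}))/\vol(\tL(b_1,\dots,b_{k-1}))=\vol(\tL(b_1,\dots,b_\ell))/\vol(\tL(b_1,\dots,b_{k-1}))$, and then re-expanding this single ratio in the \emph{other} order $b_1,\dots,b_{k-1},b_k,\dots,b_{\ell-1},b_\ell$ would only give $\prod_{i=k}^{\ell-1}\|b_i^*\|$ — so instead I will argue termwise: a short induction on $\ell-k$ (or a direct Gram-determinant computation using that $\pi_i(b_\ell)$ and $\pi_i(b_i)$ differ by adjusting for the reordering of $b_k,\dots,b_{i-1}$) shows each new ratio at index $i$ equals $\|\pi_i(b_\ell)\|$ after the appropriate size-reduction-free bookkeeping, which is exactly the per-factor form asserted in the lemma.
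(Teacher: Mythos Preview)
Your setup through the observation that only the partial volumes $V_j:=\vol(\tL(b_1,\dots,b_j))^2$ for $j=k,\dots,\ell-1$ change is fine. The argument breaks at the displayed formula
\[
\frac{\pot(\sigma_{k,\ell}B)}{\pot(B)}=\prod_{i=k}^{\ell}\frac{(\text{new ratio at }i)^2}{\|\pi_i(b_i)\|^2}.
\]
This does not follow from ``taking the quotient of the (equal) telescoping products'': the potential ratio is $\prod_{j=k}^{\ell-1} V'_j/V_j$, which is a product of \emph{partial-volume} ratios, not a product of Gram--Schmidt-norm ratios $\|c_i^*\|^2/\|b_i^*\|^2$. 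In fact your own telescoping remark shows $\prod_{i=k}^{\ell}\|c_i^*\|^2/\|b_i^*\|^2=1$ identically, so the displayed formula would force $\pot(\sigma_{k,\ell}B)=\pot(B)$, already false for an adjacent swap in rank~$2$. The subsequent termwise claim --- that the new Gram--Schmidt norm at position $i$ equals $\|\pi_i(b_\ell)\|$ --- is also false: for $i>k$ the new $i$-th vector is $b_{i-1}$, and its component orthogonal to $\spa\{b_1,\dots,b_{k-1},b_\ell,b_k,\dots,b_{i-2}\}$ has no reason to have norm $\|\pi_i(b_\ell)\|$ (a quick check with $k=1$, $\ell=2$ already fails). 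So the ``short induction'' you defer to at the end would be proving a false statement.

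The missing idea is much simpler than the mixed-span bookkeeping you attempt. For $k\le j\le \ell-1$, the \emph{set} of the first $j$ vectors after applying $\sigma_{k,\ell}$ is exactly $\{b_1,\dots,b_{j-1},b_\ell\}$; hence
\[
V'_j=\vol(\tL(b_1,\dots,b_{j-1},b_\ell))^2=V_{j-1}\cdot\|\pi_j(b_\ell)\|^2,
\]
so that $V'_j/V_j=\|\pi_j(b_\ell)\|^2/\|\pi_j(b_j)\|^2$ termwise, and the lemma follows at once by multiplying over $j=k,\dots,\ell-1$ (the $j=\ell$ factor in the statement is $1$). The paper's own proof takes a different but equally short route: induction on $\ell-k$ via $\sigma_{k,\ell}=\sigma_{k,k+1}\sigma_{k+1,\ell}$, using only the classical adjacent-swap identity $\pot(\sigma_{k,k+1}B)=\pot(B)\cdot\|\pi_k(b_{k+1})\|^2/\|\pi_k(b_k)\|^2$ and the fact that the first $k$ vectors of $\sigma_{k+1,\ell}B$ are still $b_1,\dots,b_k$.
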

\begin{proof}
  First note that it is well-known that $\pot(\sigma_{k,k+1}B)= \|\pi_k(b_{k+1})\|^2 /
  \|\pi_k(b_k)\|^2 \cdot \pot(B)$. This property is used in the proofs of the polynomial running
  time of LLL \cite{LLL,mi02}.

  We prove the claim by induction over $\ell-k$.  The claim is true for $k=\ell$. For $k<\ell$,
  $\sigma_{k,\ell}=\sigma_{k,k+1}\sigma_{k+1,\ell}$. As $b_\ell$ is the $(k+1)$-th basis vector of
  $\sigma_{k+1,\ell}B$, with the above identity we get
  $\pot(\sigma_{k,\ell}B)=\pot(\sigma_{k,k+1}\sigma_{k+1,\ell}B)=\frac{\|\pi_k(b_{\ell})\|^2}{\|\pi_k(b_k)\|^2}
  \cdot \pot(\sigma_{k+1,\ell}B)$, which completes the proof. \qed
\end{proof}

\section{The Potential-LLL Reduction}
\label{sec:algorithm}

In this section we present our polynomial time variant of \dlll. We start
with the definition of a $\delta$-\plll\ reduced basis. Then we present an
algorithm that outputs such a basis followed by a runtime proof. 
\begin{definition}
\label{def:potLLL}
Let $\delta \in (1/4,1]$. A lattice basis $B=[b_1,\dots,b_n]$ is \emph{$\delta$-\plll reduced} if and only if
\begin{enumerate}
\item $\forall 1\leq j<i\leq n: |\mu_{i,j}| \leq \frac{1}{2}$ (size-reduced).
\item $\forall 1\leq k<\ell\leq n: \delta \cdot \pot(B) \leq
\pot(\sigma_{k,\ell}B)$.
\end{enumerate}

\end{definition}

\begin{lemma}
A $\delta$-\plll reduced basis $B$ is also $\delta$-LLL reduced. 
\end{lemma}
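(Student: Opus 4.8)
The plan is to show that each of the two defining conditions of $\delta$-\plll reducedness implies the corresponding condition of $\delta$-LLL reducedness. The first condition is literally identical in both definitions (size-reducedness, $|\mu_{i,j}| \leq 1/2$ for all $j < i$), so there is nothing to prove there. The whole content of the lemma is therefore to deduce the Lovász condition $\delta \cdot \|\pi_k(b_k)\|^2 \leq \|\pi_k(b_{k+1})\|^2$ for all $1 \leq k < n$ from the potential condition $\delta \cdot \pot(B) \leq \pot(\sigma_{k,\ell}B)$ for all $1 \leq k < \ell \leq n$.

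The key observation is that the Lovász condition is exactly the special case $\ell = k+1$ of the potential condition, once we invoke Lemma~\ref{lem:pot}. Concretely, I would fix $k$ with $1 \leq k < n$ and apply the potential condition with $\ell = k+1$: this gives $\delta \cdot \pot(B) \leq \pot(\sigma_{k,k+1}B)$. By Lemma~\ref{lem:pot} (or equivalently the well-known identity quoted at the start of its proof), $\pot(\sigma_{k,k+1}B) = \pot(B) \cdot \frac{\|\pi_k(b_{k+1})\|^2}{\|\pi_k(b_k)\|^2}$. Substituting and cancelling the strictly positive factor $\pot(B)$ yields $\delta \leq \frac{\|\pi_k(b_{k+1})\|^2}{\|\pi_k(b_k)\|^2}$, which rearranges to $\delta \cdot \|\pi_k(b_k)\|^2 \leq \|\pi_k(b_{k+1})\|^2$, i.e.\ the Lovász condition at index $k$. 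Since $k$ was arbitrary, the second LLL condition holds, and combined with size-reducedness we conclude $B$ is $\delta$-LLL reduced.

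There is essentially no obstacle here; the only thing to be careful about is that $\pot(B) > 0$ (so that division/cancellation is legitimate), which holds because $B$ is a basis and hence all $\|b_i^*\| > 0$. It is worth remarking — though not strictly needed for the proof — that the converse fails: a $\delta$-LLL reduced basis need not be $\delta$-\plll reduced, since the potential condition for $\ell > k+1$ can be violated even when all adjacent ratios satisfy the Lovász bound; this is precisely why \plll is a genuinely stronger notion and relates it to the deep-insertion idea.
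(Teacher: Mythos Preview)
Your proof is correct and follows exactly the paper's approach: specialize the potential condition to consecutive indices $\ell = k+1$ and invoke Lemma~\ref{lem:pot} to recover the Lov\'asz condition, noting that size-reducedness is identical in both definitions. The extra remarks you add (positivity of $\pot(B)$, failure of the converse) are valid but not needed for the argument.
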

\begin{proof}
  Lemma~\ref{lem:pot} shows that $\delta \cdot \pot(B) \leq \pot(\sigma_{i,i+1}B)$ if and only if
  $\delta \|\pi_i(b_i)\|^2 \leq\|\pi_i(b_{i+1})\|^2$. Thus the Lov\'asz condition is implied by the
  second condition in Definition~\ref{def:potLLL} restricted to consecutive pairs, i.e.\ $\ell = k +
  1$. \qed
\end{proof}

\begin{lemma}
\label{lem:dlllplll}
A $\delta$-\dlll reduced basis $B$ is also $\delta^{n-1}$-\plll reduced.
\end{lemma}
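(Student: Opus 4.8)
The plan is to reduce everything to Lemma~\ref{lem:pot}, which already expresses $\pot(\sigma_{k,\ell}B)$ in terms of ratios of projected lengths $\|\pi_i(b_\ell)\|^2/\|\pi_i(b_i)\|^2$. Since the size-reduction condition is literally the same in Definition~\ref{def:potLLL} and in the definition of a $\delta$-\dlll reduced basis, it suffices to verify the second (potential-based) condition of Definition~\ref{def:potLLL} with $\delta$ replaced by $\delta^{n-1}$.

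First I would fix a pair $1\leq k<\ell\leq n$ and apply Lemma~\ref{lem:pot} to write $\pot(\sigma_{k,\ell}B)/\pot(B) = \prod_{i=k}^{\ell}\|\pi_i(b_\ell)\|^2/\|\pi_i(b_i)\|^2$; the task is then to bound this product from below by $\delta^{n-1}$. Next I would bound each factor separately. For the indices $i$ with $k\leq i\leq \ell-1$, the defining inequality of a $\delta$-\dlll reduced basis, applied to the pair $(i,\ell)$, gives $\|\pi_i(b_\ell)\|^2\geq \delta\|\pi_i(b_i)\|^2$, so that factor is $\geq\delta$. The final factor, at $i=\ell$, equals $1$, and $1\geq\delta$ because $\delta\leq 1$. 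Multiplying, the product is at least $\delta^{\ell-k}$, and since $\ell-k\leq n-1$ and $\delta\leq 1$ this is at least $\delta^{n-1}$. Hence $\delta^{n-1}\pot(B)\leq\pot(\sigma_{k,\ell}B)$ for every pair $k<\ell$, which is exactly the second condition for $\delta^{n-1}$-\plll reducedness.

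I do not expect a genuine obstacle here; the proof is short. The two points to be careful about are bookkeeping points. First, the argument uses the \dlll condition for the pair $(i,\ell)$ with $k\leq i<\ell$, so it relies on the blocksize being $\beta=n$ (a \dlll reduced basis in the unrestricted sense); for a truly restricted blocksize one would only obtain a correspondingly weaker statement, so I would state the lemma for $\beta=n$ as written. Second, the product runs over the $\ell-k+1$ indices $i=k,\dots,\ell$, but the term $i=\ell$ only contributes a factor $\geq 1$ rather than $\geq\delta$; thus the correct power of $\delta$ is $\ell-k$, not $\ell-k+1$, and this is precisely why $\delta^{n-1}$ (and not $\delta^{n}$) is the right constant.
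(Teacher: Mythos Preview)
Your proof is correct and follows essentially the same approach as the paper: both reduce to Lemma~\ref{lem:pot} and use the \dlll inequality $\|\pi_i(b_\ell)\|^2 \ge \delta\,\|\pi_i(b_i)\|^2$ for each $k \le i < \ell$ to control the product, noting that the $i=\ell$ factor is $1$. The paper phrases it as a proof by contradiction (if the product were $<\delta^{n-1}$ then some factor would be $<\delta$), which is just the contrapositive of your direct bound.
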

\begin{proof}
  We proceed by contradiction.  Assume that $B$ is not $\delta^{n-1}$-\plll reduced, i.e. there
  exist $1\leq k < \ell \leq n$ such that $\delta^{n-1}\pot(B)> \pot(\sigma_{k,\ell}B)$. By
  Lemma~\ref{lem:pot} this is equivalent to
  \[
  \delta^{n-1} > \prod^\ell_{i=k}
  \frac{\|\pi_i(b_\ell)\|^2}{\|\pi_i(b_i)\|^2}=\prod^{\ell-1}_{i=k}
  \frac{\|\pi_i(b_\ell)\|^2}{\|\pi_i(b_i)\|^2}\,.
  \]
  It follows that there exist a $j \in [k, \ell-1]$ such that
  $\|\pi_j(b_\ell)\|^2/\|\pi_j(b_j)\|^2<\delta^{(n-1)/(\ell-k)}\leq\delta$ which implies that $B$ is
  not $\delta$-\dlll reduced. \qed
\end{proof}

\subsection{High-Level Description}

A high-level version of the algorithm is presented as Algorithm~\ref{alg:potLLL}. The algorithm is
very similar to the classical LLL algorithm and the classical \dlll reduction by Schnorr and Euchner
\cite{BKZ}. During its execution, the first $\ell-1$~basis vectors are always $\delta$-\plll
reduced (this guarantees termination of the algorithm). As opposed to classical
LLL, and similar to \dlll, $\ell$ might decrease by more than
one. This happens precisely during deep insertions: in these cases, the $\ell$-th vector is not
swapped with the $(\ell-1)$-th one, as in classical LLL, but with the $k$-th one for $k < \ell -
1$. In case $k = \ell - 1$, this equals the swapping of adjacent basis vectors as in classical LLL.
The main difference of \plll\ and \dlll\ is the condition that controls
insertion of a vector.

\begin{algorithm}[htbp]
  \caption{Potential LLL}
  \label{alg:potLLL}
  \SetKwComment{Comment}{$\vartriangleright$~}{}
  \SetCommentSty{textit}
  \DontPrintSemicolon
  
  \KwIn{Basis $B \in \Z^{n \times m}$, $\delta \in (1/4,1]$}
  \KwOut{A $\delta$-\plll reduced basis.}
  $\ell \leftarrow 1$\;
  \While{$\ell \leq n$}{
    	Size-reduce$(B)$\;\label{alg:potLLL:sizereduce}
	$k \leftarrow \argmin_{1\leq j\leq \ell}\pot(\sigma_{j,\ell}B)$\label{alg:potLLL:min}\;	
	\eIf{$\delta \cdot \pot(B)>\pot(\sigma_{k,\ell}B$)\label{alg:potLLL:if}}{
			$B \leftarrow \sigma_{k,\ell}B$\;
			$\ell \leftarrow k$ \;
	}{
		$\ell \leftarrow \ell+1$ \;
	}
  }
\Return $B$\;
\end{algorithm}

\subsection{Detailed Description}
There are two details to consider when implementing Algorithm~\ref{alg:potLLL}. The first one is
that since the basis vectors~$b_1, \dots, b_{\ell-1}$ are already $\delta$-\plll reduced, they are
in particular also size-reduced. Moreover, the basis vectors $b_{\ell+1}, \dots, b_n$ will be
considered later again. So in line~\ref{alg:potLLL:sizereduce} of the algorithm
it suffices to
size-reduce $b_\ell$ by $b_1, \dots, b_{\ell-1}$ as in classical LLL. Upon termination, when $\ell =
n + 1$, the whole basis will be size-reduced.

Another thing to consider is the computation of the potentials of $B$ and $\sigma_{j,\ell} B$ for $1
\le j \le \ell$ in line~\ref{alg:potLLL:min}.  Computing the potential of the basis is a rather slow
operation. But we do not need to compute the potential itself, but only compare
$\pot(\sigma_{k,\ell} B)$ to $\pot(B)$; by Lemma~\ref{lem:pot}, this quotient can be efficiently
computed. Define $P_{k,\ell} := \pot(\sigma_{k,\ell} B) / \pot(B)$. The ``if''-condition in
line~\ref{alg:potLLL:if} will then change to $\delta > P_{k,\ell}$, and the minimum in
line~\ref{alg:potLLL:min} will change to $\argmin_{1 \leq j \leq \ell} P_{j,\ell}$. Using
$P_{\ell,\ell} = 1$ and
\begin{equation}
\label{equ:P_jl}
P_{j,\ell}= \frac{\pot(\sigma_{j,\ell}
B)}{\pot(B)}=P_{j+1,\ell} \cdot \frac{\|\pi_{j}(b_\ell)\|^2}{\|\pi_{j}(b_{j})\|^2} =
P_{j+1,\ell} \cdot \frac{\|b_\ell^*\|^2+ \sum_{i=j}^{\ell-1} \mu_{\ell,i}^2
\|b_i^*\|^2}{\|b_{j}^*\|^2}
\end{equation}
for $j < \ell$ (Lemma~\ref{lem:pot}), we can quickly determine $\argmin_{1 \leq j \leq \ell}
P_{j,\ell}$ and check whether $\delta > P_{k,\ell}$ if $j$ minimizes $P_{j,\ell}$.

 A detailed version of Algorithm~\ref{alg:potLLL} with these steps
filled in is described as Algorithm~\ref{alg:potLLL:full}.
On line~\ref{alg:potLLL:full:potmod} of Algorithm~\ref{alg:potLLL:full}, $P_{j,\ell}$ is iteratively computed as in
Equation~\eqref{equ:P_jl}.  Clearly, the algorithm could be further improved by iteratively
computing $\|\pi_j(b_\ell)\|^2$ from $\|\pi_{j+1}(b_\ell)\|^2$.  Depending on the implementation of
the Gram-Schmidt orthogonalization, this might already have been computed and stored. For example,
when using the Gram-Schmidt orthogonalization as described in Figure~4
of~\cite{nguyen-stehle-fplll-revisited}, then $\|\pi_j(b_\ell)\|^2 = s_{j-1}$ after computation of
$\|b_\ell^*\|^2$ and $\mu_{\ell,j}$ for $1 \le j < \ell$.

\begin{algorithm}[htbp]
  \caption{Potential LLL, detailed version}
  \label{alg:potLLL:full}
  \SetKwComment{Comment}{$\vartriangleright$~}{}
  \SetCommentSty{textit}
  \DontPrintSemicolon
  
  \KwIn{Basis $B \in \Z^{n \times m}$, $\delta \in (1/4,1]$}
  \KwOut{A $\delta$-\plll reduced basis.}
  $\ell \leftarrow 1$\;
  \While{$\ell \leq n$\label{alg:potLLL:full:while}}{
        Size-reduce$(b_\ell \text{ by } b_1, \dots, b_{\ell-1})$\;\label{alg:potLLL:full:sizereduce}
        Update($\|b_\ell^*\|^2$ and $\mu_{\ell,j}$ for $1 \le j <
\ell$)\;\label{alg:potLLL:full:update1}
        $P \leftarrow 1$, \quad $P_{\min} \leftarrow 1$, \quad $k \leftarrow 1$ \;
        \For{$j=\ell-1$ down to $1$}{
                $P \leftarrow P \cdot \frac{ \|b_\ell^*\|^2 +\sum_{i=j}^{\ell-1} \mu_{\ell,i}^2 \|b_i^*\|^2}{\|b_j^*\|^2}$\label{alg:potLLL:full:potmod} \;
                \If{$P < P_{\min}$}{
                        $k \leftarrow j$ \;
                        $P_{\min} \leftarrow P$ \;
                }
        }
        \eIf{$\delta > P_{\min}$\label{alg:potLLL:full:if}}{
                $B \leftarrow \sigma_{k,\ell}B$\;
                Update($\|b_k^*\|^2$ and $\mu_{k,j}$ for $1 \le j <
k$)\;\label{alg:potLLL:full:update2}
                $\ell \leftarrow k$ \;
        }{
                $\ell \leftarrow \ell+1$ \;
        }
  }
\Return $B$\;
\end{algorithm}

\subsection{Complexity Analysis}

Here we show that the number of operations in the \plll algorithm are
bounded polynomially in the dimension $n$ and the logarithm of the input size.
We present the runtime for Algorithm~\ref{alg:potLLL:full}.

\begin{proposition}
\label{prop:potLLL2}
Let $\delta \in (1/4, 1)$ and $C = \max_{i=1\ldots n} \norm{b_i}^2$. Then
Algorithm~\ref{alg:potLLL:full} performs $\grossO{n^3 \log_{1/\delta}(C)}$ iterations of the
\texttt{while} loop in line~\ref{alg:potLLL:full:while} and a total of $\grossO{(n + m) n^4
  \log_{1/\delta}(C)}$ arithmetic operations.
\end{proposition}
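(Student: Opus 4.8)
The plan is to bound separately (i) the number of \texttt{while}-loop iterations and (ii) the cost of a single iteration, then multiply. For the iteration count I would use the standard LLL potential argument adapted to our setting. The key observation is that $\pot(B)$ is a positive integer (since $B$ is an integer basis, each $\vol(\tL(b_1,\dots,b_i))^2 = \det$ of an integer Gram matrix), so $\pot(B) \ge 1$ throughout. At the start, $\pot(B) = \prod_{i=1}^n \norm{b_i^*}^{2(n-i+1)} \le \prod_{i=1}^n \norm{b_i}^{2(n-i+1)} \le C^{n(n+1)/2}$, using $\norm{b_i^*} \le \norm{b_i}$ and the definition of $C$. Size-reduction in line~\ref{alg:potLLL:full:sizereduce} does not change any $\norm{b_j^*}$ and hence leaves $\pot(B)$ invariant. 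Whenever the ``if''-branch is taken, we replace $B$ by $\sigma_{k,\ell}B$ where $\delta > P_{k,\ell} = \pot(\sigma_{k,\ell}B)/\pot(B)$, so $\pot(B)$ strictly decreases by a factor $< \delta < 1$; each such step reduces $\log_{1/\delta}\pot(B)$ by at least $1$. Since $\log_{1/\delta}\pot(B)$ starts at most $\tfrac{n(n+1)}{2}\log_{1/\delta}(C) = \grossO{n^2\log_{1/\delta}(C)}$ and never goes below $0$, there are at most $\grossO{n^2\log_{1/\delta}(C)}$ insertion steps.

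Next I would convert the bound on insertion steps into a bound on all iterations. Track the index $\ell$: each non-insertion iteration increases $\ell$ by exactly $1$, and $\ell$ never exceeds $n+1$; each insertion iteration sets $\ell \leftarrow k \ge 1$, i.e.\ decreases $\ell$ by at least $1$ (in fact by $\ell - k \ge 1$). A standard amortization — the total increase of $\ell$ over the run equals its total decrease plus the net change $\le n$ — shows the number of non-insertion iterations is at most $n$ plus the total decrease of $\ell$, which is at most $n$ times the number of insertion steps. Hence the total number of \texttt{while}-iterations is $\grossO{n \cdot n^2 \log_{1/\delta}(C)} = \grossO{n^3 \log_{1/\delta}(C)}$, matching the claim.

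For the per-iteration cost I would go through Algorithm~\ref{alg:potLLL:full} line by line, bounding the bit sizes and the arithmetic. One needs the usual fact (as in the LLL analysis, e.g.\ \cite{mi02,LLL}) that throughout the algorithm all $\norm{b_i^*}^2$, all $\norm{b_i}^2$, all $\mu_{i,j}$ (after size-reduction), and hence all the $P$ values, have numerators and denominators bounded by a polynomial in the input bit length; this follows because the potential only decreases and the $b_i^*$ span the same flag of subspaces, so the $\norm{b_i^*}^2$ stay within the range determined by the initial data. Given this, size-reducing $b_\ell$ against $b_1,\dots,b_{\ell-1}$ costs $\grossO{n}$ vector operations on $m$-dimensional integer vectors, i.e.\ $\grossO{nm}$ arithmetic operations; the Gram–Schmidt update of $\norm{b_\ell^*}^2$ and the $\mu_{\ell,j}$ costs $\grossO{n}$ inner products, i.e.\ $\grossO{nm}$ operations (or $\grossO{n^2}$ if working from the Gram matrix); the \texttt{for} loop computing the $P_{j,\ell}$ via \eqref{equ:P_jl} costs $\grossO{n}$ operations per value if one maintains the running sum $\sum_{i=j}^{\ell-1}\mu_{\ell,i}^2\norm{b_i^*}^2$ incrementally, hence $\grossO{n^2}$ total (naively $\grossO{n^2}$ anyway, certainly within $\grossO{(n+m)n}$); applying $\sigma_{k,\ell}$ and the second Update likewise cost $\grossO{(n+m)n}$. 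So each iteration is $\grossO{(n+m)n}$ arithmetic operations. Multiplying by the $\grossO{n^3\log_{1/\delta}(C)}$ iteration bound gives $\grossO{(n+m)n^4\log_{1/\delta}(C)}$ arithmetic operations, as stated.

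The step I expect to be the main obstacle is the last one — controlling bit sizes, i.e.\ proving that all intermediate quantities stay polynomially bounded so that "arithmetic operation" is a legitimate unit of account. The delicate point is that deep insertions can move a large vector $b_\ell$ far down into position $k$; one must argue that this does not blow up the later $\norm{b_i^*}^2$. The cleanest route is the classical one: the product $\prod_{i=1}^j \norm{b_i^*}^2 = \vol(\tL(b_1,\dots,b_j))^2$ is a positive integer and is bounded above by $\pot(B) \le \pot(B_{\mathrm{init}})$, while each individual $\norm{b_i^*}^2 \ge$ (a lattice-determined positive lower bound, e.g.\ $\ge \lambda_1^2/(\text{something})$, or more simply $\ge 1$ after clearing denominators). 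Combining upper and lower bounds on these partial products pins down each $\norm{b_i^*}^2$ to polynomial bit size; the size-reduced $\mu_{i,j} \le 1/2$ then keeps $\norm{b_i}^2$ polynomially bounded as well. Once this boundedness lemma is in place, the rest is the routine bookkeeping sketched above.
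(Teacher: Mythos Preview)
Your proposal is correct and follows essentially the same route as the paper: bound the number of insertion steps via the potential (initial value $\le C^{O(n^2)}$, each insertion shrinks it by a factor $<\delta$, and it stays bounded below), amortize the $\ell$-increasing steps against the insertions to get $\grossO{n^3\log_{1/\delta}(C)}$ iterations, and then count $\grossO{(n+m)n}$ arithmetic operations per iteration. The only cosmetic difference is your lower bound $\pot(B)\ge 1$ via integrality versus the paper's $\pot(B)\ge\vol(\tL)$; both yield the same $\grossO{n^2\log_{1/\delta}(C)}$ bound on insertions. Your closing paragraph on bit sizes is a worthwhile remark but is not needed for the proposition as stated, which counts arithmetic operations rather than bit operations---the paper simply omits that discussion.
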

\begin{proof}
Let us start by upper bounding the potential $I$ of the input basis with respect to $C$. 
Let
$d_j := \vol\left(\tL(b_1,\dots,b_j)\right)^2 = \prod_{i=1}^j \norm{b_i^*}^2$ for $j=1,\dots,n$.
Recall that $\norm{b_i^*}^2 \leq \norm{b_i}^2 \leq C$ for $i=1,\dots,n$ and hence $d_j<C^j$. Consequently we have the following upper bound on the potential
\begin{equation}
\label{equ:I}
I = \prod_{j=1}^{n-1} d_j \cdot \vol(\tL) \leq \prod_{j=1}^{n-1} C^j \cdot \vol(\tL) \leq
C^{\frac{n(n-1)}{2}} \cdot \vol(\tL)\,.
\end{equation}
Now, by a standard argument, we show that the number of iterations of the
while loop is bounded by $\grossO{n^3 \log_{1/\delta}(C)}$.
In each iteration, either the iteration counter~$\ell$ is increased by 1, or a swapping takes place
and $\ell$ is decreased by at most $n - 1$. In the
swapping case, the potential is decreased by a factor at least $\delta$. So after $N$ swaps the potential $I_N$ satisfies $I \geq (1/\delta)^NI_N \geq (1/\delta)^N \cdot \vol(\tL)$ using the fact that $I_N \geq \vol(\tL)$. Consequently the number of swaps~$N$ is
bounded by $N \leq \log_{1/\delta} (I / \vol(\tL))$.
By Equation (\ref{equ:I}) we get that $N \leq \log_{1/\delta}\bigl(C^{n(n-1)/2}\bigr)$.
Now note that
the number~$M$ of iterations where~$\ell$ is increased by 1 is at most $M \le (n - 1) \cdot N + n$.
This shows that the number of iterations is bounded by $\grossO{n^3
\log_{1/\delta}(C)}$.

Next we show that the number of operations performed in each iteration of the \texttt{while} loop is
dominated by $\grossO{n^2 + n m}$ operations.
Size-reduction (line~\ref{alg:potLLL:full:sizereduce}) and the first update
step (line~\ref{alg:potLLL:full:update1}) can be done in \grossO{n m} steps. The
for-loop consists of \grossO{n} iterations where the most expensive operation
is the update of P in line~\ref{alg:potLLL:full:potmod}. Therefore the loop
requires \grossO{n^2} arithmetic operations. Swapping can be done in \grossO{n}
operations, whereas the second update in line~\ref{alg:potLLL:full:update2}
requires again \grossO{n^2} operations.

It follows that each iteration costs at most \grossO{n^2 + n m} arithmetic operations. This shows
that in total the algorithm performs $\grossO{(n + m) n^4 \log(C)}$ operations. \qed
\end{proof}

\subsection{Worst-Case Behavior}
\label{sec:critical}
For $\delta=1$, there exist so called \emph{critical bases} which are
$\delta$-LLL reduced bases and whose Hermite factor reaches the worst case bound
in~(\ref{equ:hermiteLLL})~\cite{sc94}.
These bases can be adapted to form a \dlll reduced basis where the first vector
reaches the worst case bound  in~(\ref{equ:hermiteLLL}).

\begin{proposition}\label{prop:critical}
For $\alpha = \sqrt{3/4}$,  the rows of $B = A_n(\alpha)$ (see below) define a $\delta$-\dlll reduced basis with $\delta=1$ and $\|b_1\|^2=\frac{1}{\alpha^{(n-1)/2}} \vol(\tL(A_n))^{1/n}$.
\end{proposition}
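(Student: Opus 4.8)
The plan is to exhibit the matrix $A_n(\alpha)$ explicitly and verify the two defining conditions of a $\delta$-\dlll reduced basis directly from its Gram--Schmidt data. The natural candidate, adapting the classical critical basis of \cite{sc94}, is the lower-triangular matrix whose $i$-th row is $b_i = \alpha^{i-1} e_i + \tfrac12\sum_{j<i}\alpha^{j-1} e_j$ (up to an overall scaling and a choice of the off-diagonal pattern that makes it size-reduced); more precisely I would take $b_i^* = \alpha^{i-1} e_i$, so that $\norm{b_i^*} = \alpha^{i-1}$, and all Gram--Schmidt coefficients equal to $\mu_{i,j} = \tfrac12$ for $j < i$. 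The first step is therefore to write down $A_n(\alpha)$ in closed form and record that $\norm{b_i^*}^2 = \alpha^{2(i-1)}$, which is a geometric sequence with ratio $\alpha^2 = 3/4$.

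Next I would check the two conditions. Size-reducedness is immediate since every $|\mu_{i,j}| = \tfrac12 \le \tfrac12$. For the deep-insertion condition with $\delta = 1$ and blocksize $\beta = n$ — which, since $\beta = n$, is just the plain \dlll condition $\forall\, 1 \le k < \ell \le n:\ \|\pi_k(b_k)\|^2 \le \|\pi_k(b_\ell)\|^2$ — I would compute $\|\pi_k(b_\ell)\|^2 = \norm{b_\ell^*}^2 + \sum_{i=k}^{\ell-1}\mu_{\ell,i}^2\norm{b_i^*}^2$ using the same identity that appears in Equation~\eqref{equ:P_jl}. With $\norm{b_i^*}^2 = \alpha^{2(i-1)}$ and $\mu_{\ell,i}^2 = 1/4$ this becomes $\alpha^{2(\ell-1)} + \tfrac14\sum_{i=k}^{\ell-1}\alpha^{2(i-1)}$, a geometric sum that I would evaluate in closed form. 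Comparing this to $\norm{b_k^*}^2 = \alpha^{2(k-1)}$, the inequality reduces, after dividing by $\alpha^{2(k-1)}$, to $\alpha^{2(\ell-k)} + \tfrac14\cdot\frac{1-\alpha^{2(\ell-k)}}{1-\alpha^2} \ge 1$; substituting $\alpha^2 = 3/4$ makes $\tfrac14/(1-\alpha^2) = 1$, so the left side collapses to exactly $1$ for every $k < \ell$. Thus the \dlll condition holds with equality throughout, which is precisely what one expects from a critical (worst-case) basis.

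It remains to compute $\norm{b_1}$ and $\vol(\tL)$. Here $\norm{b_1} = \norm{b_1^*} = 1$ (the $i=1$ entry), while $\vol(\tL) = \prod_{i=1}^n \norm{b_i^*} = \prod_{i=1}^n \alpha^{i-1} = \alpha^{n(n-1)/2}$. Hence $\vol(\tL)^{1/n} = \alpha^{(n-1)/2}$ and $\norm{b_1}^2 = 1 = \alpha^{-(n-1)/2}\cdot\vol(\tL)^{1/n}$, which is the claimed identity. (If the stated normalization of $A_n$ puts $\norm{b_n^*} = 1$ instead of $\norm{b_1^*}=1$, one simply rescales the whole matrix by $\alpha^{-(n-1)}$; the ratio $\norm{b_1}/\vol(\tL)^{1/n}$ is scale-invariant, so the identity is unaffected.)

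The only genuine subtlety — not a deep obstacle, but the step most likely to hide an error — is the bookkeeping in the geometric-sum computation and making sure the algebraic miracle $\tfrac14/(1-\alpha^2) = 1$ is exploited at the right place; once $\alpha = \sqrt{3/4}$ is plugged in, every inequality becomes an equality and the verification is purely mechanical. I would present the sum evaluation carefully but treat the rest as routine, and I would note that the equality case also shows the bound in~\eqref{equ:hermiteLLL} is attained, so no better Hermite factor is provable for \dlll in the worst case.
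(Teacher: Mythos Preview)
Your proposal is correct and follows essentially the same route as the paper: both read off the Gram--Schmidt data $\norm{b_i^*}^2=\alpha^{2(i-1)}$ and $\mu_{i,j}=\tfrac12$ from the triangular form, then compute $\|\pi_k(b_\ell)\|^2$ as a geometric sum and use $\alpha^2=3/4$ to collapse it to exactly $\alpha^{2(k-1)}=\|\pi_k(b_k)\|^2$, with the volume identity following from the diagonal. The only cosmetic difference is that you invoke the closed-form $\tfrac14/(1-\alpha^2)=1$ explicitly, whereas the paper simply asserts the resulting identity; your hedging about the normalization of $A_n$ is unnecessary since the matrix is displayed in the paper with $\norm{b_1^*}=1$.
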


\begin{equation}
A_n(\alpha):=\left(
\begin{matrix}
1			&	0				& 	 \cdots			& \cdots 		&		\cdots			&0\\
\frac{1}{2}	& 	\;\alpha 			&	\ddots
&			&						&
\vdots\\
\vdots		&	\;\frac{\alpha}{2}	& \;\alpha^2
& \ddots		&						& \vdots \\
 \vdots 	&	\vdots 			& \frac{\alpha^2}{2}	&\ddots		&	\ddots				&  \vdots \\
 \vdots		&	\vdots			&\vdots 				&\ddots		&\alpha^{n-2}				& 0 \\
\frac{1}{2}	&	\;\frac{\alpha}{2}	&\;\frac{\alpha^2}{2}	& \hdots
		& \;\frac{\alpha^{n-2}}{2}	& \;\alpha^{n-1}
\end{matrix}
\right)
\end{equation}

\begin{proof}
From the diagonal form of $A_n$ it is easy to see that $\vol(\tL)=\det(A_n)=\alpha^{n(n-1)/2}$.
Hence $\|b_1\|^2=1=1/\alpha^{(n-1)/2}\vol(\tL)$. 
It remains to show that $A_n$ is \dlll reduced. Note that $B^*$ is a diagonal
matrix with the same entries on the diagonal as $B$.
Note that it is size reduced as for all $1\leq j<i\leq n$ we have
$\mu_{i,j}=\langle b_i,b^*_j\rangle/\langle b^*_j,b^*_j
\rangle=\tfrac{1}{2} \alpha^{2 (j-1)}/\alpha^{2 (j-1)}=\frac{1}{2}$.
Further, using that $\pi_j(b_i)=b_i^* +
\sum_{\ell=j}^{i-1}\mu_{i,\ell}b_{\ell}^*$, we have that
\[
\|\pi_j(b_i)\|^2=\alpha^{2(i-1)}+\frac{1}{4}\sum^{i-1}_{\ell=j}
\alpha^{2(\ell-1)}=\alpha^{2(j-1)}\left(\frac{1}{4}\sum^{i-j-1}_{\ell=0}
\alpha^{2\ell}+\alpha^{2(i-j)}\right)\;.
\]
As for $\alpha=\sqrt{3/4}$, we have that
$\frac{1}{4}\sum^{i-j-1}_{\ell=0} \alpha^{2\ell}+\alpha^{2(i-j)}=1$,
and hence $\|\pi_j(b_i)\|^2=\alpha^{2(j-1)}= \|\pi_j(b_j)\|^2$. Therefore, the
norms of the projections for fixed~$j$ are all equal, and $A_n(\alpha)$ is $\delta$-\plll-reduced with
$\delta=1$. \qed
\end{proof}
Using Lemma~\ref{lem:dlllplll}, we obtain:
\begin{corollary}
For $\alpha = \sqrt{3/4}$,  the rows of $A_n(\alpha)$ define a $\delta$-\plll reduced basis
with $\delta=1$ and $\|b_1\|^2=\frac{1}{\alpha^{(n-1)/2}} \vol(\tL(A_n))^{1/n}$. \qed
\end{corollary}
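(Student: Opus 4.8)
The plan is to combine Proposition~\ref{prop:critical} with Lemma~\ref{lem:dlllplll}, both already established above. First I would invoke Proposition~\ref{prop:critical}: for $\alpha = \sqrt{3/4}$ the rows of $A_n(\alpha)$ form a $\delta$-\dlll reduced basis with $\delta = 1$, and the same proposition records the identity $\|b_1\|^2 = \frac{1}{\alpha^{(n-1)/2}} \vol(\tL(A_n))^{1/n}$. Since the corollary concerns the very same basis, the Hermite-factor part of the statement is merely a restatement of what Proposition~\ref{prop:critical} already gives and requires no further argument.

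It then remains to upgrade ``$\delta$-\dlll reduced with $\delta = 1$'' to ``$\delta$-\plll reduced with $\delta = 1$''. For this I would apply Lemma~\ref{lem:dlllplll}, which asserts that a $\delta$-\dlll reduced basis is $\delta^{n-1}$-\plll reduced. The size-reducedness condition (condition~1) is literally identical in the two definitions, so it transfers without comment; the content of the lemma is entirely about the potential condition (condition~2). Substituting $\delta = 1$ and using $1^{n-1} = 1$ yields that $A_n(\alpha)$ is $1$-\plll reduced, which is exactly the claim.

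The only point worth a moment's thought — and the closest thing to an obstacle — is that Lemma~\ref{lem:dlllplll} loses a factor, producing $\delta^{n-1}$ rather than $\delta$, so for $\delta < 1$ the conclusion would be strictly weaker. The corollary exploits the fact that at the extremal value $\delta = 1$ this loss is invisible, so the deep-insertion worst case transports losslessly to the potential-based notion. Alternatively one could bypass Lemma~\ref{lem:dlllplll} entirely and quote the final line of the proof of Proposition~\ref{prop:critical}, where $1$-\plll-reducedness of $A_n(\alpha)$ is verified directly from the equalities $\|\pi_j(b_i)\|^2 = \|\pi_j(b_j)\|^2$ together with Lemma~\ref{lem:pot}; I would note this as a remark but present the route via Lemma~\ref{lem:dlllplll} as the main proof, since it is shorter.
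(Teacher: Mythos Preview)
Your approach is correct and matches the paper's own proof, which consists of the single remark ``Using Lemma~\ref{lem:dlllplll}, we obtain:'' preceding the corollary. You have also correctly identified the alternative direct route via the final line of the proof of Proposition~\ref{prop:critical}.
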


\section{Experimental Results}\label{sec:experiments}
Extensive experiments have been made to examine how the classical LLL reduction algorithm performs
in practice \cite{ng06,gam08}. We ran some experiments to compare our \plll
algorithm to our implementations of LLL, \dlll, and BKZ.

\subsection{Setting}
We run the following algorithms, each with the standard reduction parameter $\delta=0.99$:
\begin{enumerate}
\item classical LLL,
\item \plll,
\item \dlll with blocksize $\beta=5$ and $\beta=10$ (the latter up to dimension
240 only),
\item BKZ with blocksize $5$ (BKZ-5) and $10$ (BKZ-10).
\end{enumerate}
The implementations all use the same arithmetic back-end. Integer arithmetic is done using GMP, and
Gram-Schmidt arithmetic is done as described in~\cite[Figures~4 and
5]{nguyen-stehle-fplll-revisited}. As floating point types, \texttt{long double} (x64 extended
precision format, 80~bit representation) and MPFR arbitrary precision floating point numbers are
used with a precision as described in \cite{nguyen-stehle-fplll-revisited}. The implementations of
\dlll and BKZ follow the classical description in \cite{BKZ}. \plll was implemented as described in
Algorithm~\ref{alg:potLLL:full} (page~\pageref{alg:potLLL:full}).

We ran experiments in dimensions 40 to 300, considering the dimensions which are multiples of~10 from 40 to 200 and dimensions that are multiples of~20 from 200 to 300. In each dimension, we considered 50 random lattices. More
precisely, we used the lattices of seed 0 to 49 from the SVP
Challenge\footnote{\url{http://www.latticechallenge.org/svp-challenge}}. For each lattice, we used
two bases: the original basis and a $0.75$-LLL reduced basis.

All experiments were run on Intel\textsuperscript{\textregistered}
Xeon\textsuperscript{\textregistered} X7550 CPUs at 2~GHz on a shared memory machine. For dimensions
40 up to 160, we used \texttt{long double} arithmetic, and for dimensions 160 up to 300, we used
MPFR. In dimension 160, we did the experiments both using \texttt{long double} and MPFR
arithmetic. The reduced lattices did not differ. In dimension~170, floating point errors prevented
the \texttt{long double} arithmetic variant to complete on some of the lattices.

\subsection{Results}
For each run, we recorded the length of the shortest vector as well as the required CPU time for the
reduction. Our main interest lies in the $n$-th root of the \emph{Hermite factor}
$\frac{\|b_1\|}{\vol(\tL)^{1/n}}$, where $b_1$ is the shortest vector of the basis of $\tL$
returned. Figure~\ref{fig:overview} (see pages~\pageref{fig:overview}-\pageref{fig:comparism:pp:ld}
for all figures) compares the average $n$-th root of the Hermite factor and average logarithmic
running time of the algorithms for all dimensions. The graphs also show confidence intervals for the
average value with a confidence level of 99.9\%.

As one can see, there is a clear hierarchy with respect to the achieved Hermite factor. Our \plll
performs better than BKZ-5, though worse than \dlll with $\beta = 5$ and BKZ-10, which in turn
perform worse than \dlll with $\beta = 10$. The behavior for preprocessed bases and bases in Hermite
normal form is very similar. We collected the average $n$-th root Hermite factors $\|b_1\|^{1/n}
\cdot \vol(\tL)^{-1/n^2}$ in Table~\ref{table:red} and compared them to the worst-case bound in
Equation~\eqref{equ:hermiteLLL}. Our data for LLL is similar to the one in \cite{ng06} and
\cite[Table~1]{gam08}. However, we do not see convergence of the $n$-th root Hermite factors in our
experiments, as they are still increasing even in high dimensions $n>200$.

\begin{table}[h]
  \renewcommand{\arraystretch}{1.3}
  \begin{center}
    \begin{tabular}{  l l l l l l l }
      \toprule
      Dimension & & $n = 100$ & $n = 160$ & $n = 220$ & $n = 300$ \\
      \midrule
      Worst-case bound (proven) & ~\ & $\approx 1.0774$ & $\approx 1.0777$ & $\approx 1.0778$ & $\approx 1.0779$ \\
      Empirical $0.99$-LLL & & $ 1.0187$ & $ 1.0201$ & $ 1.0207$ & $ 1.0212$ \\ 
      Empirical $0.99$-BKZ-5 & ~\ & $ 1.0154$ & $ 1.0158$ & $ 1.0161$ & $ 1.0163$ \\ 
      Empirical $0.99$-\plll & & $ 1.0146$ & $ 1.0150$ & $ 1.0152$ & $ 1.0153$ \\ 
      Empirical $0.99$-\dlll with $\beta=5$ & ~\ & $ 1.0138$ & $ 1.0142$ & $ 1.0147$ & $ 1.0150$ \\ 
      Empirical $0.99$-BKZ-10 & ~\ & $ 1.0140$ & $ 1.0143$ & $ 1.0144$ & $ 1.0145$ \\ 
      Empirical $0.99$-\dlll with $\beta=10$ & ~\ & $ 1.0128$ & $ 1.0132$ & $ 1.0135$ & \;\quad--- \\ 
      \toprule
    \end{tabular}
  \end{center}
  \caption{Worst case bound and average case estimate for $\delta$-LLL reduction, $\delta$-\dlll
  reduction, $\delta$-\plll reduction and $\delta$-BKZ reduction of the $n$-th root Hermite factor
  $\|b_1\|^{1/n} \cdot \vol(\tL)^{-1/n^2}$. The entries are sorted in descending order with respect
  to the observed Hermite factors.}
  \label{table:red}
\end{table}

For the running time comparison, Figure~\ref{fig:overview:time} shows that the observed order is
similar to the order induced by the Hermite factors. LLL is fastest, followed by BKZ-5 and \plll,
then by BKZ-10 and \dlll with $\beta = 5$, and finally there is \dlll with $\beta = 10$. The running
time of \plll and BKZ-5 is very close to each other for higher dimensions, while \plll is clearly
slower for smaller dimensions. While Figure~\ref{fig:overview:time} shows that BKZ-5 is usually
slightly faster than \plll and BKZ-10 slightly faster than \dlll with $\beta = 5$, the behavior is
more interesting if one considers preprocessed and non-preprocessed bases separately. We do this in
Figures~\ref{fig:comparism:up} and \ref{fig:comparism:pp}. Recall that the unprocessed bases are
bases in Hermite normal form, and the processed bases are the same bases run through 0.75-LLL.

In Figures~\ref{fig:comparism:up}, we compare the behavior for unprocessed bases in Hermite normal
form. Every line connecting bullets corresponds to the behavior of one algorithm for different
dimensions. Again, the box surrounding a bullet is a confidence interval with confidence level
99.9\%. The shaded regions show which Hermite factors can be achieved in every
dimension
by these algorithms. Algorithms on the border of the region are optimal for their Hermite factor:
none of the other algorithms in this list produces a better average Hermite factor in less time. In
Figure~\ref{fig:comparism:up:ld}, one can see that BKZ-5 produces worse output slower than \plll up
to dimension~160. Also, BKZ-10 is inferior to \dlll with $\beta = 5$ as it is both slower and
produces worse Hermite factors. As the dimension increases, the difference in running time becomes
less and less. In fact, for dimension~180 and larger, BKZ-10 becomes faster than \dlll with $\beta =
5$ (Figure~\ref{fig:comparism:up:real}).

On the other hand, for preprocessed bases, the behavior is different, as
Figure~\ref{fig:comparism:pp} shows. Here, BKZ-5 is clearly faster than \plll and BKZ-10 clearly
faster than \dlll with $\beta = 5$. In fact, for dimensions 60, 80 and 100, \plll is slower than
BKZ-10 while producing worse output (Figure~\ref{fig:comparism:pp:ld}). For higher dimensions, \plll
is again faster than BKZ-10 (Figure~\ref{fig:comparism:pp:real}), though not
substantially. Therefore, for preprocessed bases, it seems that BKZ-10 is more useful than \plll
and \dlll with $\beta = 5$.

\begin{figure}[h]
  \begin{center}
    \begin{subfigure}[t]{\textwidth}
      \begin{center}
        \includegraphics[width=\textwidth]{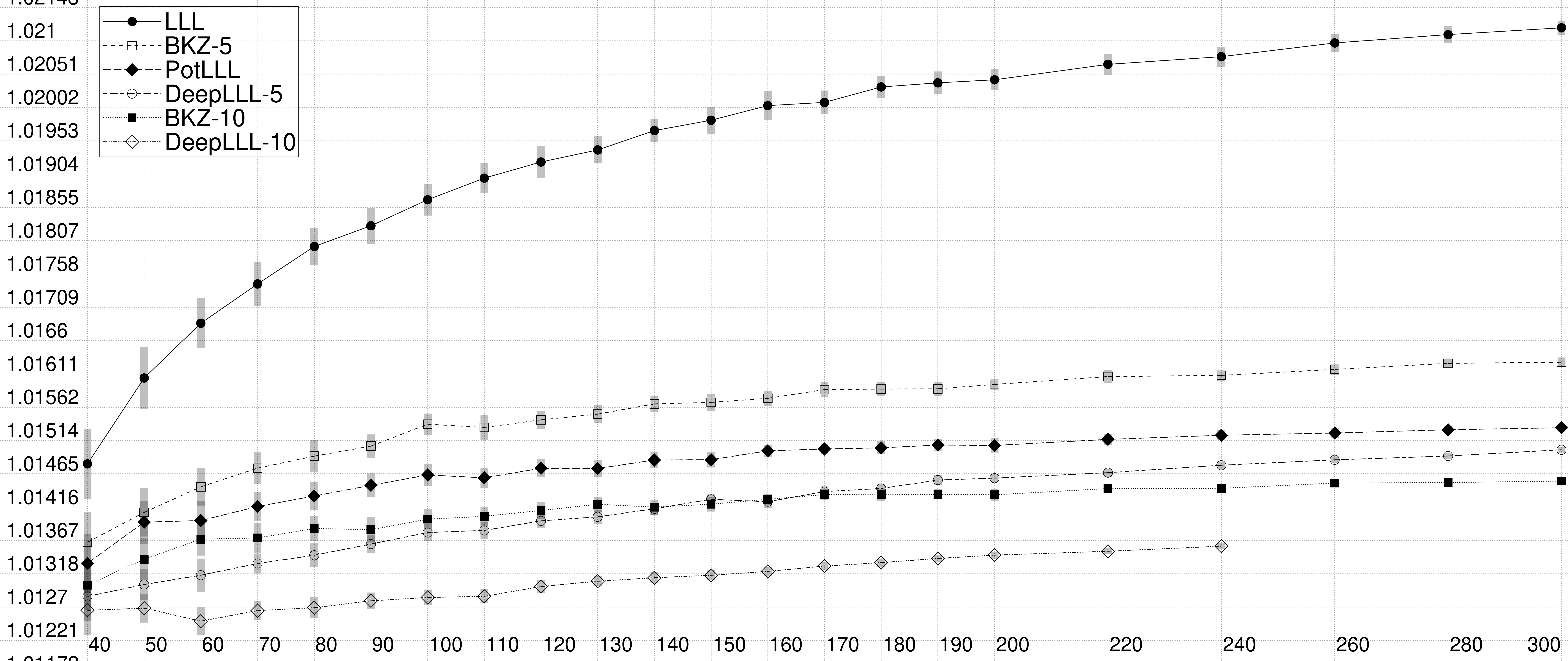}
      \end{center}
      \caption{Average $n$-th root Hermite factor.\vspace{0.3cm}}
      \label{fig:overview:apfa}
    \end{subfigure}
    \begin{subfigure}[t]{\textwidth}
      \begin{center}
        \includegraphics[width=\textwidth]{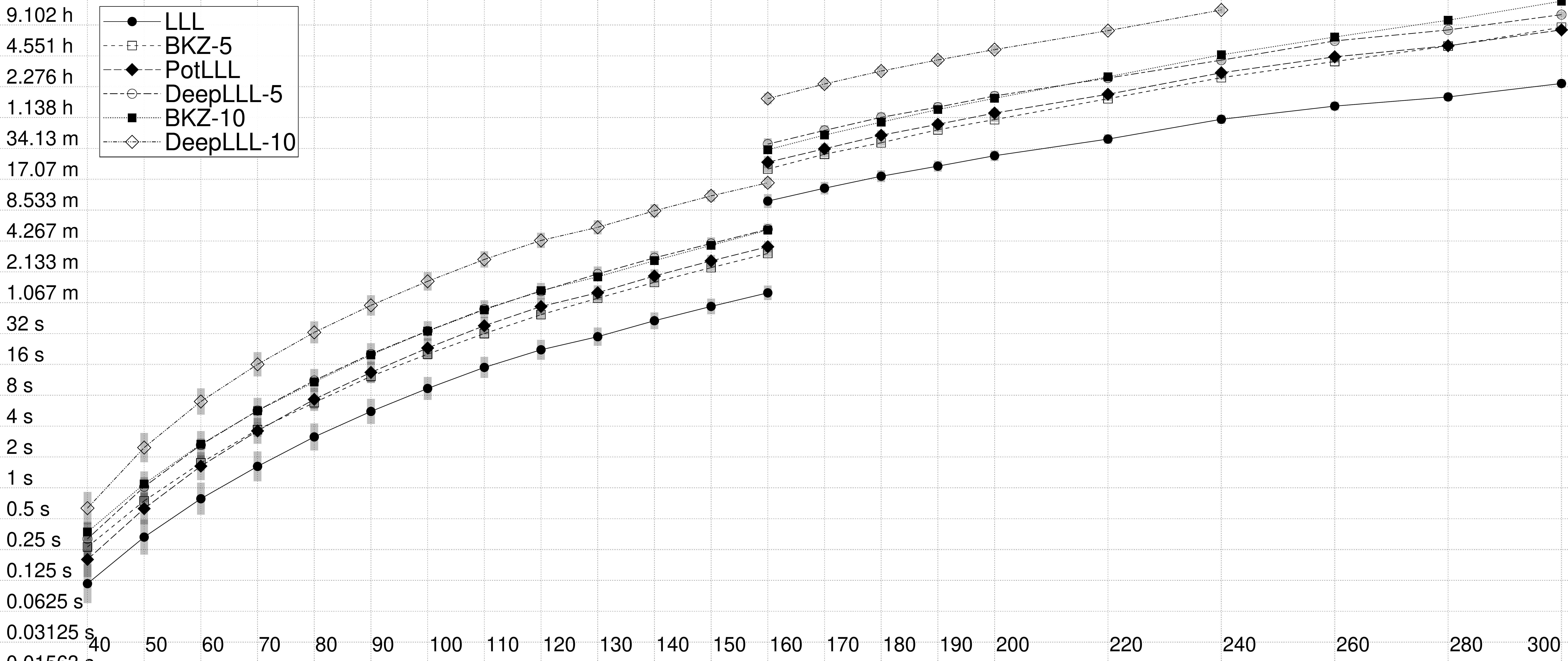}
      \end{center}
      \caption{Average logarithmic CPU time.}
      \label{fig:overview:time}
    \end{subfigure}
    \caption{Overview of performance of the algorithms for dimensions~$n$ ($x$ axis) from 40 to 300
      (using MPFR for $n \ge 160$).}
    \label{fig:overview}
  \end{center}
\end{figure}

\begin{figure}[h]
  \begin{center}
    \begin{subfigure}[t]{\textwidth}
      \begin{center}
        \includegraphics[width=\textwidth]{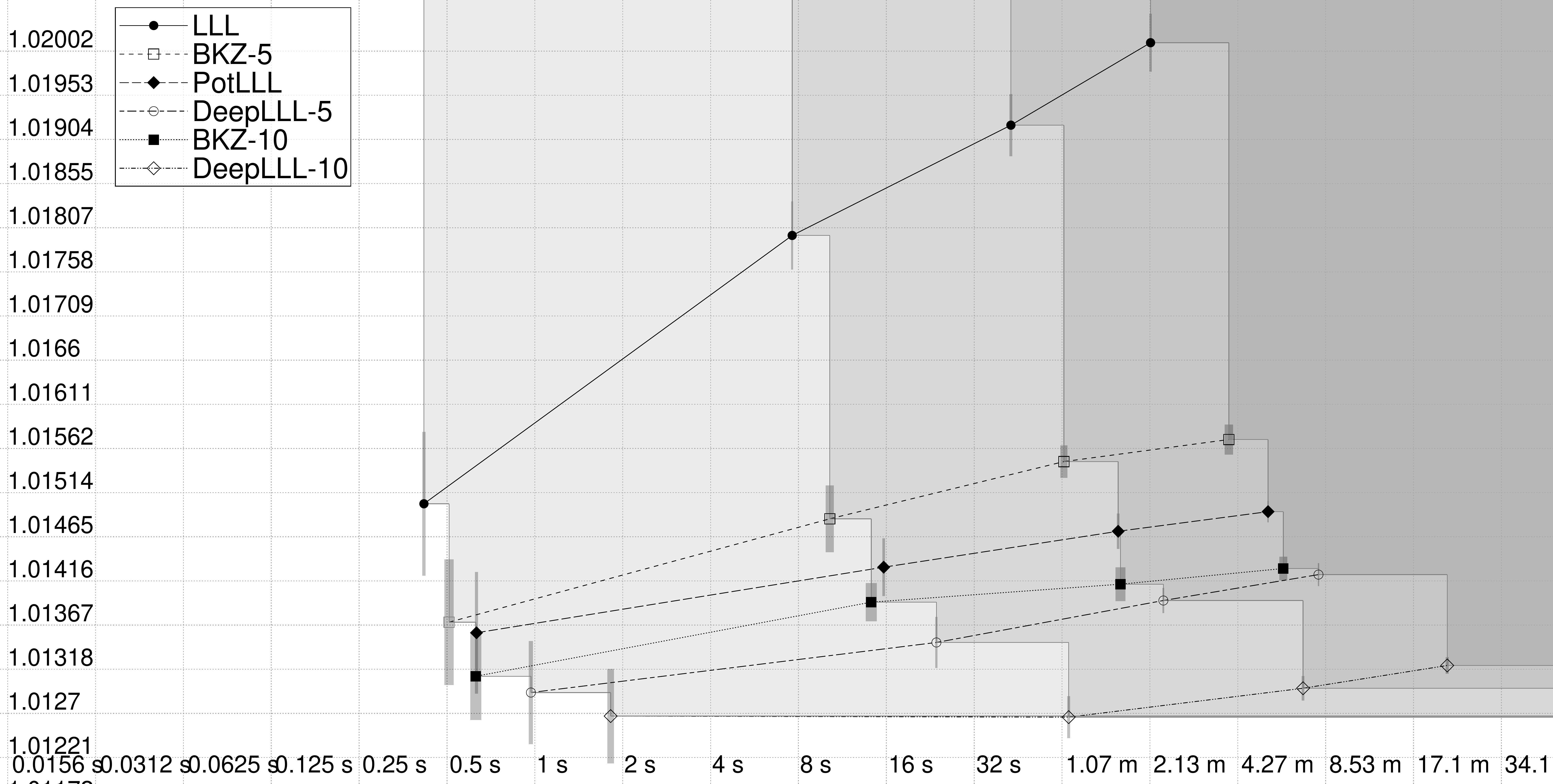}
      \end{center}
      \caption{\texttt{long double} arithmetic. The highlighted areas represent dimensions 40, 80,
      120 and 160.\vspace{0.3cm}}
      \label{fig:comparism:up:ld}
    \end{subfigure}
    \begin{subfigure}[t]{\textwidth}
      \begin{center}
        \includegraphics[width=\textwidth]{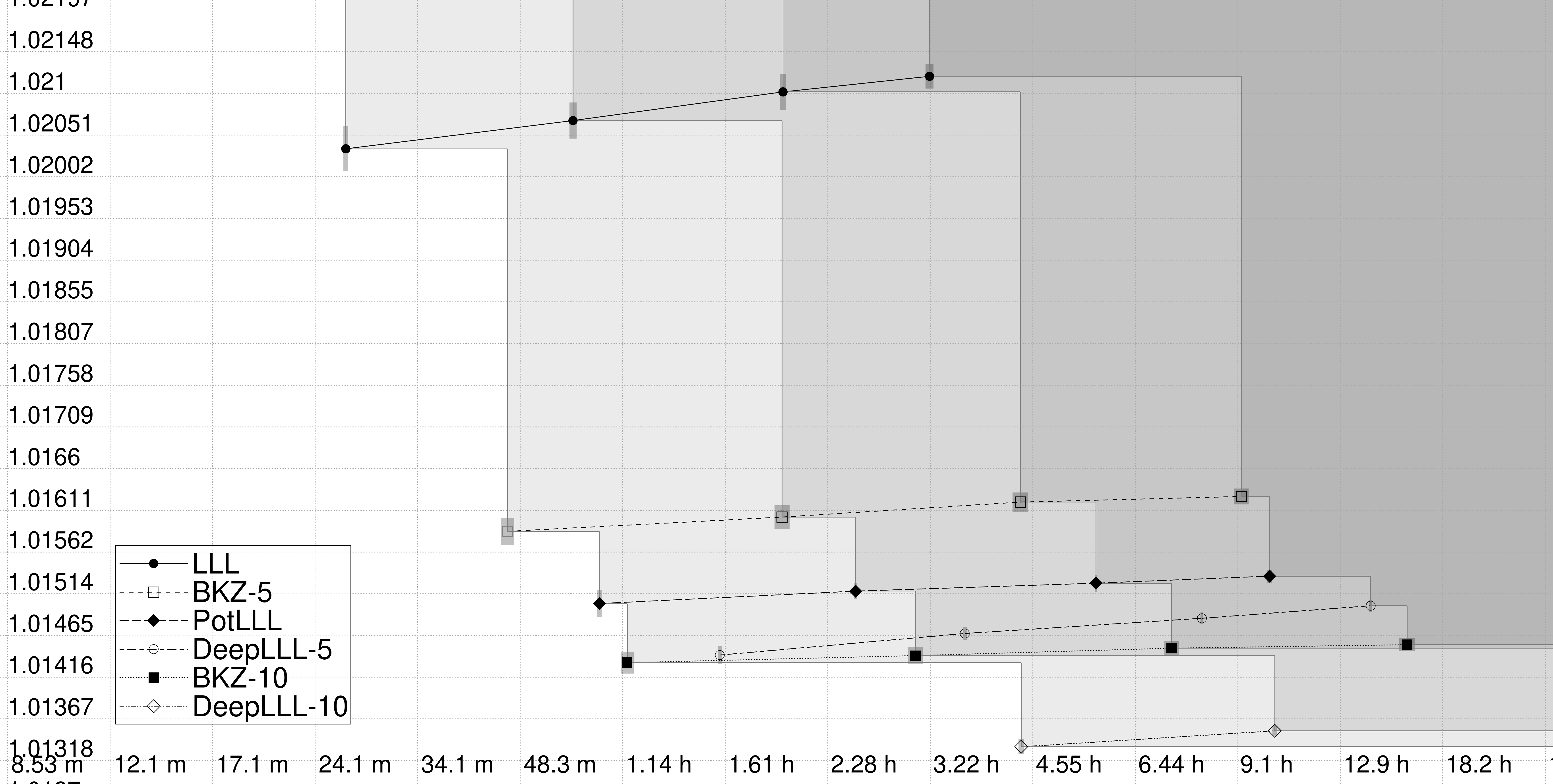}
      \end{center}
      \caption{MPFR arithmetic. The highlighted areas represent dimensions 180, 220, 260 and 300.}
      \label{fig:comparism:up:real}
    \end{subfigure}
    \caption{Comparison of $n$-th root Hermite factor ($y$ axis) vs. running times ($x$ axis) for
    both arithmetics for the original bases.}
    \label{fig:comparism:up}
  \end{center}
\end{figure}

\begin{figure}[h]
  \begin{center}
    \begin{subfigure}[t]{\textwidth}
      \begin{center}
        \includegraphics[width=\textwidth]{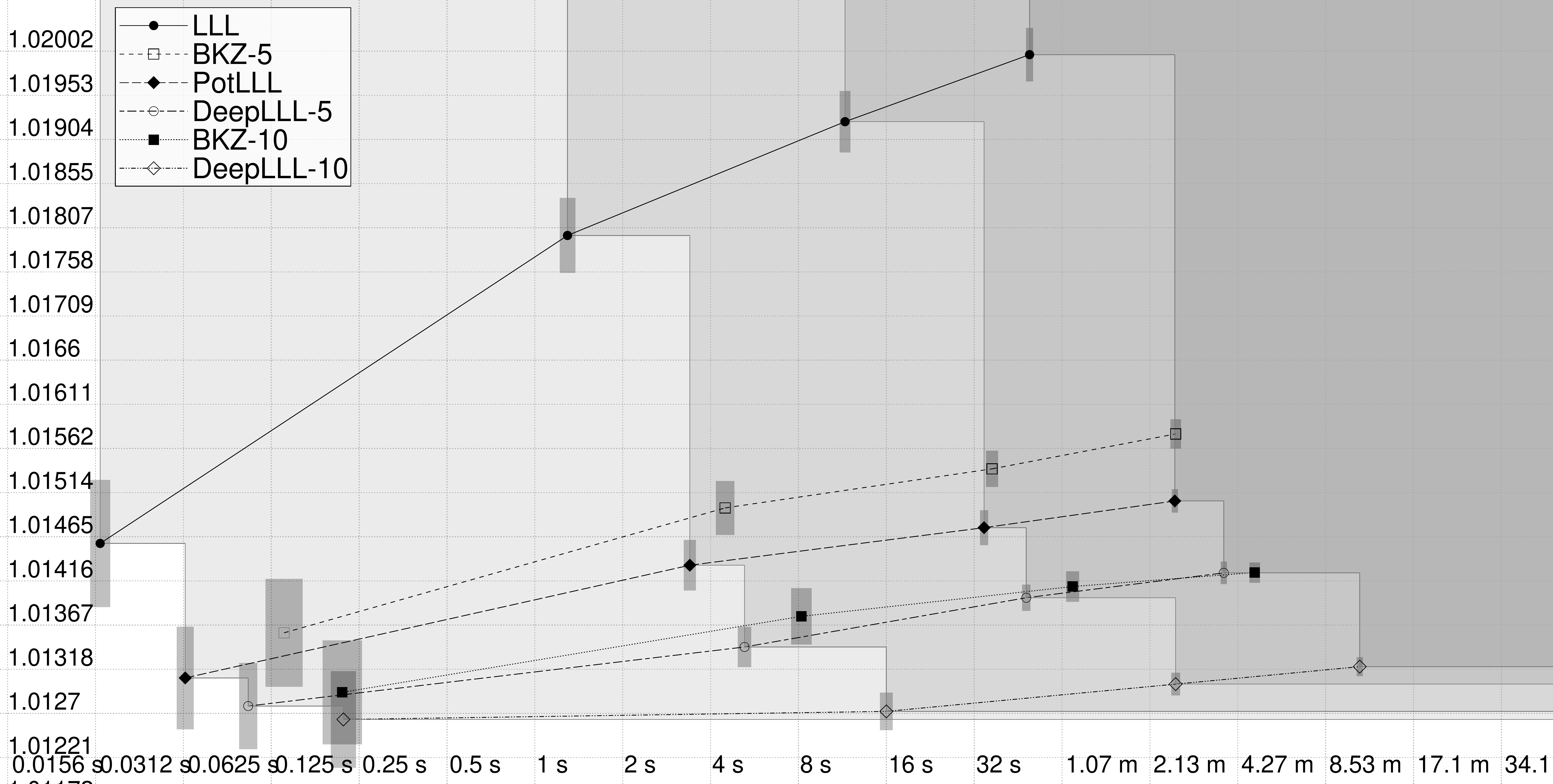}
      \end{center}
      \caption{\texttt{long double} arithmetic. The highlighted areas represent dimensions 40, 80,
      120 and 160.\vspace{0.3cm}}
      \label{fig:comparism:pp:ld}
    \end{subfigure}
    \begin{subfigure}[t]{\textwidth}
      \begin{center}
        \includegraphics[width=\textwidth]{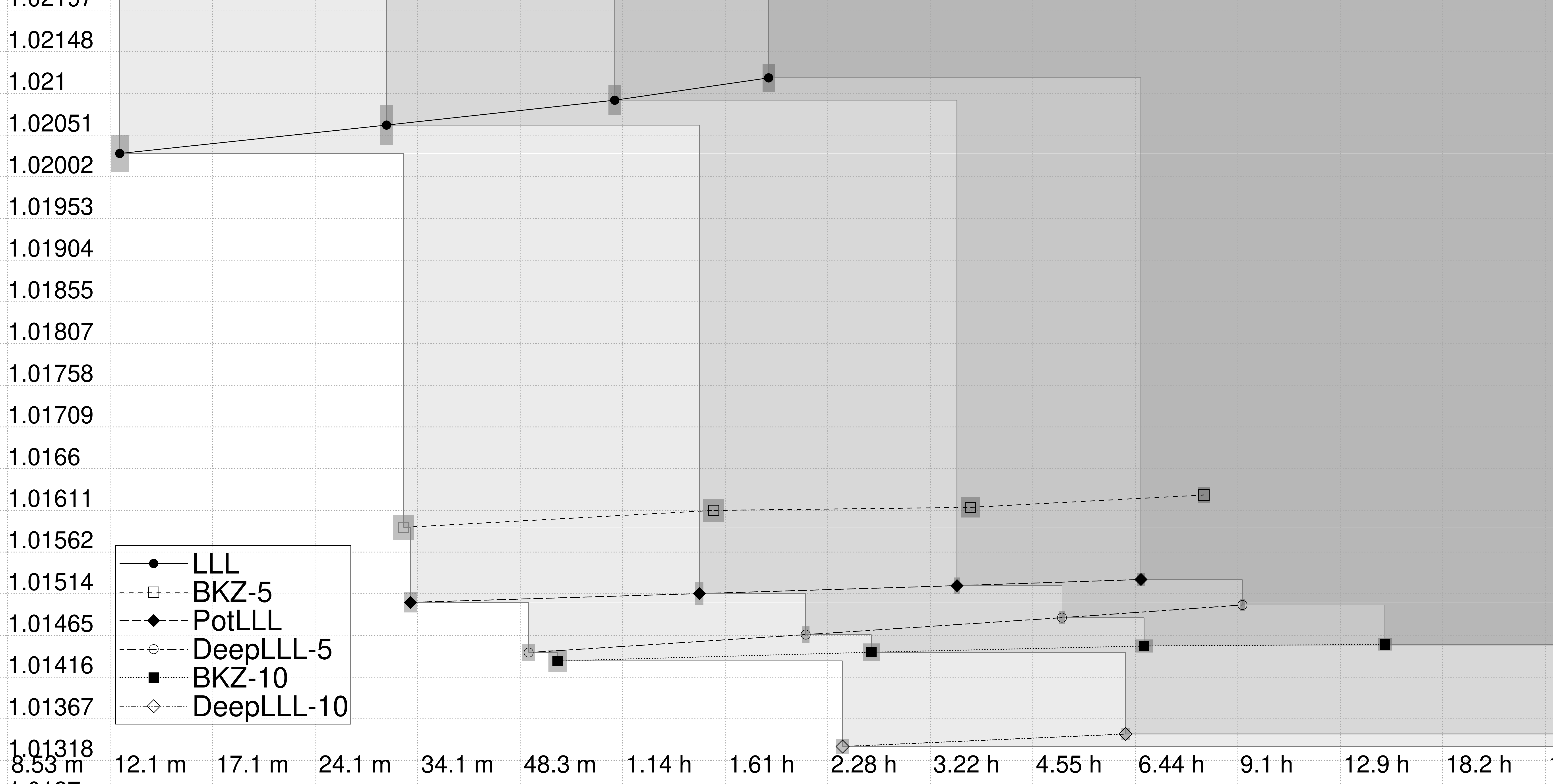}
      \end{center}
      \caption{MPFR arithmetic. The highlighted areas represent dimensions 180, 220, 260 and 300.}
      \label{fig:comparism:pp:real}
    \end{subfigure}
    \caption{Comparison of $n$-th root Hermite factor ($y$ axis) vs. running times ($x$ axis) for
    both arithmetics for preprocessed bases (0.75-LLL-reduced bases).}
    \label{fig:comparism:pp}
  \end{center}
\end{figure}
\section{Conclusion}\label{sec:conclusion}

We present a first provable polynomial time variant of Schnorr and Euchner's \dlll. While the
provable bounds are not better than for classical LLL -- in fact, for reduction parameter~$\delta =
1$, the existence of critical bases shows that better bounds do not exist -- the practical behavior
is much better than for classical LLL. We see that the $n$-th root Hermite factor of an
$n$-dimensional basis output by \plll in average does not exceed $1.0153^n$ for $n \le 300$.

For unprocessed random bases in Hermite normal form, \plll even outperforms
BKZ-5. Our experiments also
show that for such bases, \dlll with $\beta = 5$ outperforms BKZ-10. On the other hand, for bases
which are already reasonably preprocessed, for example by applying 0.75-LLL to a basis in Hermite
normal form, our algorithm is only slightly faster and sometimes even slower than BKZ-10, while
producing longer vectors.

It is likely that the improvements of the $L^2$ algorithm~\cite{ng06} and the $\tilde{L^1}$
algorithm~\cite{NovocinSV11} can be used to improve the runtime of our \plll\ algorithm, in order to
achieve faster runtime. We leave this for future work.

Moreover, deep insertions can be used together with BKZ as well. In particular, potential minimizing
deep insertions can be used. We added classical deep insertions and potential minimizing deep
insertions to BKZ. First experiments up to dimension~120 suggest that with regard to the output
quality, BKZ-5 with potential minimizing deep insertions is better than \plll, but worse than BKZ-5
with classical deep insertions, which in turn comes close to BKZ-10. BKZ-10 with potential
minimizing deep insertions is close to \dlll with $\beta = 5$, and BKZ-10 with classical deep
insertions close to \dlll with $\beta = 10$. For dimensions around 100, the speed of similarly
performing algorithms also behaves similarly.

\paragraph*{Acknowledgements}
This work was supported by CASED (\url{http://www.cased.de}). Michael Schneider is
supported by project BU~630/23-1 of the German Research Foundation (DFG). Urs Wagner and Felix
Fontein are supported by SNF grant no.~132256.

\end{document}